
\documentclass[runningheads]{llncs}

\usepackage[english]{babel}


\usepackage{comment}

\usepackage{amsmath}


\usepackage{graphicx}
\usepackage[colorlinks=true, allcolors=blue]{hyperref}


\newtheorem{observation}{Observation}

\newenvironment{hidden}{}

\excludecomment{hidden}

\usepackage{tabularx}

\newcommand{\Trule}{\rule{0pt}{3ex}}
\newcommand{\Brule}{\rule[-1.5ex]{0pt}{0pt}}
\newcommand{\prob}[3]{
\begin{center}
\begin{tabularx}{\textwidth}{|l X|}
	\hline
	\multicolumn{2}{|l|}{#1\Trule} \\
	\textbf{Input:}&{#2}\\
	\textbf{Task: }&{#3\Brule}\\
	\hline
\end{tabularx}
\end{center}
}

\usepackage{xspace}

\usepackage{enumitem}
\setlist[enumerate]{nosep}

\newcommand{\AgreeForest}{\textsc{MRAF}\xspace}

\title{Relaxed Agreement Forests}
\author{Virginia Ardévol Martínez\inst{2}, Steven Chaplick\inst{1}, Steven Kelk\inst{1}, Ruben Meuwese\inst{1}, Mat\'u\v{s} Mihal\'ak\inst{1}, Georgios Stamoulis\inst{1}}

\institute{Dept. Advanced Computing Sciences, Maastricht University, the Netherlands,\\ 
\and Universit\'{e} Paris-Dauphine, PSL University, CNRS, LAMSADE, France.
}

\authorrunning{V. Ardévol Martínez et al.}


\begin{document}
\maketitle

\begin{abstract}
There are multiple factors
which can cause the phylogenetic inference process to produce two or more conflicting hypotheses of the evolutionary history of a set $X$ of biological entities. 
That is:  phylogenetic trees with the same set of leaf labels $X$ but with distinct topologies. 
This leads naturally to the goal of quantifying the difference between two such trees $T_1$ and $T_2$. 
Here we introduce the problem of computing a \emph{maximum relaxed agreement forest} (\AgreeForest) and use this as a proxy for the dissimilarity of $T_1$ and $T_2$, which in this article we assume to be unrooted binary phylogenetic trees. 
\AgreeForest asks for a partition of the leaf labels $X$ into a minimum number of blocks $S_1, S_2, \ldots S_k$ such that for each $i$, the subtrees induced in $T_1$ and $T_2$ by $S_i$ are isomorphic up to suppression of degree-2 nodes and taking the labels $X$ into account. 
Unlike the earlier introduced maximum agreement forest (MAF) model,
the subtrees induced by the $S_i$ 
are allowed to overlap. 
We prove that it is NP-hard to compute MRAF, by reducing from the problem of partitioning a permutation into a minimum number of monotonic subsequences (PIMS). 
%
Furthermore, we show that MRAF has a polynomial time $O( \log n)$-approximation algorithm where $n=|X|$ and permits exact algorithms with single-exponential running time.
When at least one of the two input trees is a caterpillar, we prove that testing whether a MRAF has size at most $k$ can be answered in polynomial time when $k$ is fixed. 
We also note that on two caterpillars the approximability of MRAF is related to that of PIMS. 
Finally, we establish a number of bounds on MRAF, compare its behaviour to MAF both in theory and in an experimental setting and discuss a number of open problems. 
\end{abstract}

\section{Introduction}
The central challenge of phylogenetics, which is the study of phylogenetic (evolutionary) trees, is to infer a tree whose leaves are bijectively labeled by a set of species $X$ and which accurately represents the evolutionary events that gave rise to $X$ \cite{steel2016phylogeny}.
There are many existing techniques to infer phylogenetic trees from biological data and under a range of different objective functions \cite{lemey2009phylogenetic}. 
The complexity of this problem arises from the fact that we typically only have indirect data available, such as DNA sequences of the species $X$. 
Different techniques regularly yield trees with differing topologies, or the same technique constructs different trees depending on which part of a genome the DNA data is extracted from \cite{nakhleh2013computational}. 
Hence, it is insightful to formally quantify the dissimilarity between (pairs of) phylogenetic trees, stimulating research into various distance measures. 


Here we propose a new dissimilarity measure between unrooted phylogenetic trees $T_1, T_2$ which is conceptually related to the well-studied \emph{agreement forest} abstraction. An agreement forest (AF) is a partition of $X$ into blocks which induce, in the two input trees, non-overlapping isomorphic subtrees, modulo edge subdivision and taking the labels $X$ into account;
computing such a forest of minimum size (a MAF) is NP-hard \cite{HeinJWZ96}. The AF abstraction originally derives its significance from the fact that, in unrooted (respectively, rooted) phylogenetic trees, an AF of minimum size models \emph{Tree Bisection and Reconnection} (TBR) (respectively, \emph{rooted Subtree Prune and Regraft}, rSPR) distance \cite{AllenSteel2001,bordewich2005computational}. 
For background on AFs we refer to recent articles such as \cite{chen2016approximating,bulteau2019parameterized}. Here we propose the \emph{relaxed agreement forest} abstraction (RAF). The only difference in the definition is that we no longer require the partition of $X$ to induce non-overlapping subtrees; they only have to be isomorphic (see Fig.~\ref{fig:diff}). We write MRAF to denote a relaxed agreement forest of minimum size. As we shall observe later, in the worst case
MRAF can be constant while MAF grows linearly in $|X|$.

The fact that RAFs are allowed to induce overlapping subtrees is potentially interesting from the perspective of biological modelling. Unlike an AF, multiple subtrees of the RAF can pass through a single branch of $T_1$ (or $T_2$). This allows us to view $T_1$ and $T_2$ as the union of several interleaved, overlapping, common evolutionary histories. It is beyond the scope of this article to expound upon this,
but it is compatible with the trend in the literature of phylogenetic trees (or networks) having multiple distinct 
histories woven within them \cite{degnan2009gene,nakhleh2013computational}.

Our results are as follows. First, we show that it is NP-hard to compute a MRAF. We reduce from the problem of partitioning a permutation into a minimum number of monotone subsequences (PIMS).
We show that MRAF has a polynomial time $O( \log n)$-approximation algorithm where $n=|X|$ and permits exact algorithms with single-exponential running time. 
When at least one of the two input trees is a caterpillar, we prove that ``Is there a RAF with at most $k$ components?'' can be answered in polynomial time when $k$ is fixed, i.e., in XP parameterized by $k$. 
This is achieved by dynamic programming.
We also note that on two caterpillars the approximability of MRAF is closely related to that of  PIMS. Along the way we establish a number of bounds on MRAF, compare its behaviour to MAF and
undertake an empirical analysis on two existing datasets. 

\begin{figure}[t]
\centering
\includegraphics[scale=0.8]{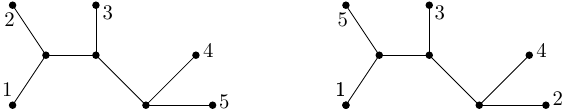}
\caption{
The two trees, while isomorphic, are not isomorphic when taking the leaf-labeling into account, and thus both MRAF and MAF cannot be of size one. A MRAF of these trees has 2 components, e.g., $\{1,2,3\}$ and $\{4,5\}$. A MAF of these trees has 3 components, e.g., $\{1,2,3\}$, $\{4\}$, and $\{5\}$.
}
\label{fig:diff} 
\end{figure}

\begin{hidden}
\begin{figure}[h]
\centering
\includegraphics[scale=0.4]{CrappyPicture.pdf}
\caption{Two phylogenetic trees on 14 leaves (phyB\_waxy from the grass dataset analysed later in the article). A MRAF of these two trees has 2 components, shown in red and purple. A MAF for these two trees has 3 components, however: the MRAF is smaller due to images of the components being allowed to overlap.}
\label{fig:intro} 
\end{figure}
\end{hidden}

\section{Preliminaries, basic properties and bounds}
Let $X$ be a set of labels (\emph{taxa}) representing species. 
An \textit{unrooted binary phylogenetic tree} $T$ on $X$ is a simple, connected, and undirected tree whose leaves are bijectively labeled with $X$ and whose other vertices all have degree 3. 
When it is clear from the context we will simply write (\emph{phylogenetic}) \emph{tree} for shorthand. For two trees $T$ and $T'$ both on the same set of taxa $X$, we write $T = T'$ if there is an isomorphism between $T$ and $T'$ that preserves the labels $X$. 
Tree $T$ is a \emph{caterpillar} if deleting the leaves of $T$ yields a path. We say that two distinct taxa $\{a, b\} \subseteq X$ form a \emph{cherry} of a tree $T$ if they have a common parent. 
The \emph{identity caterpillar} on $n$ leaves is simply the caterpillar with leaves $1, \ldots n$ in ascending order with the exception of the two cherries $\{1,2\}$ and $\{n-1, n\}$ at its ends; see e.g. the tree on the left in Fig.~\ref{fig:diff}.
Note that caterpillars are almost total orders, but not quite: the leaves in the cherries at the ends are incomparable. 
Managing this subtle difference is a key aspect of our results.

A {\it quartet} is an unrooted binary phylogenetic tree with exactly four leaves. Let $T$ be a
phylogenetic tree on $X$. If $\{a,b,c,d\}\subseteq X$ are four distinct leaves, we say that quartet $ab|cd$ \emph{is induced by (or simply `is a quartet of') $T$} if in $T$ the path from $a$ to $b$ does not intersect the path from $c$ to $d$.
Note that, for any four distinct leaves $a,b,c,d \in X$,
exactly one of the three quartets $ab|cd, ac|bd, ad|bc$ will be a quartet of $T$. It is well-known
that $T_1 = T_2$ if and only if both trees induce exactly the same set of quartet topologies \cite{steel2016phylogeny}.
For example, in Fig.~\ref{fig:diff} $12|45$ is a quartet of the first tree but not a quartet of the second tree. For $X' \subseteq X$, we write $T[X']$ to denote the unique, minimal subtree of $T$ that connects all elements in the subset $X'$. We use $T|X'$ to denote the phylogenetic tree on $X'$ obtained from $T[X']$ by suppressing 
degree-2 vertices. If $T_1|X' = T_2|X'$ then we say that the subtrees of $T_1, T_2$ induced by $X'$ are \emph{homeomorphic}.  

Let $T_1$ and $T_2$ be two phylogenetic trees on $X$. Let $\mathcal{F}= \{S_1, \ldots, S_k\}$ be a partition of $X$, where each block $S_i$, 
is referred to as a \emph{component} of $\mathcal{F}$. We say that  $\mathcal{F}$ is an \textit{agreement forest} (AF) for $T_1$ and $T_2$ if these conditions hold:
\begin{enumerate}
    \item For each $i \in \{1, 2, ..., k\}$ we have $T_1|S_i = T_2|S_i$.
    \item For each pair  $i,j \in \{1, 2, ..., k\}$ with $i \neq j$, we have that $T_1[S_i]$ and $T_1[S_j]$ are vertex-disjoint in $T$, and $T_2[S_i]$ and $T_2[S_j ]$ are vertex-disjoint in $T_2$.
\end{enumerate}
The \textit{size} of $\mathcal{F}$ is simply its number of components, i.e., $k$. Moreover, an AF with the minimum number of components (over all AFs for $T_1$ and $T_2$) is called a \textit{maximum
agreement forest} (MAF) for $T_1$ and $T_2$. For ease of reading we will also write MAF to denote the size of a MAF. This is NP-hard to compute \cite{HeinJWZ96,AllenSteel2001}.

A \emph{relaxed agreement forest} (RAF) is defined similarly to an AF, except without condition 2.
A RAF with a minimum number of components is a \emph{maximum relaxed agreement forest} (MRAF). We also use MRAF for the size of a MRAF.

\prob{\textsc{Maximum Relaxed Agreement Forest} (\AgreeForest)}
{Two unrooted binary phylogenetic trees $T_1, T_2$ on the same leaf set $X$, and a number $k$.}
{Partition $X$ into at most $k$ sets $S_1, \ldots, S_k$ where $T_1|S_i{=}T_2|S_i$ for each~$i$.}

Observation~\ref{obs:tiny} follows directly from the definitions. Observation \ref{obs:unbounded} shows that MAF and MRAF can behave very differently. 
\begin{observation}
\label{obs:tiny}
(a) A RAF with at most $\lceil \frac{n}{3} \rceil$ components always exists, where $n = |X|$, because if $|X'|=3$ and $X' \subseteq X$ we have $T_1|X' = T_2|X'$ irrespective of $X'$ or the topology of $T_1$ and $T_2$. (b) MRAF is 0 if and only if $T_1 = T_2$. (c) A partition $\{S_1, \ldots, S_k\}$ of $X$ is a RAF of $T_1, T_2$ if and only if, for each $S_i$, the set of quartets induced by $T_1|S_i$ is identical to the set of quartets induced by $T_2|S_i$.
\end{observation}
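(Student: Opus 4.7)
My plan is to treat the three parts separately; each should follow directly from the definitions, together with the quartet-characterization theorem cited just before the observation.

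For part (a), the strategy is to exhibit an explicit RAF of the claimed size by taking any partition of $X$ into $\lceil n/3 \rceil$ blocks each of size at most $3$. I would then argue that on at most three labeled leaves there is a unique unrooted binary phylogenetic tree, so $T_1|S_i = T_2|S_i$ holds automatically for every block, yielding a RAF with $\lceil n/3 \rceil$ components.

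For part (b), I would note that if $T_1 = T_2$ then the trivial single-block partition $\{X\}$ already satisfies the RAF condition, and conversely if this trivial partition is a RAF then $T_1 = T_1|X = T_2|X = T_2$. The slight mismatch between ``MRAF is $0$'' in the statement and the convention that measures a RAF by its number of components seems to reflect the usual $k{-}1$ convention from the MAF/TBR literature; I would flag this but otherwise treat both directions as one-line consequences of the definition.

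For part (c), I would invoke the cited quartet theorem: two phylogenetic trees on the same taxon set are equal iff they induce the same set of quartets. Applying this pointwise to each restricted pair $T_1|S_i$ and $T_2|S_i$ gives the claimed equivalence directly. The only subtlety---and the closest thing to an ``obstacle''---is the corner case $|S_i| \leq 3$, where neither restriction induces any quartet; here I would invoke the argument from part (a) to conclude that the restrictions are automatically equal, so the equivalence still holds vacuously. Overall the proof amounts to bookkeeping, and the careful handling of small blocks is the only thing that could trip one up.
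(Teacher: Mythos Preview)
Your proposal is correct and matches the paper's intent: the paper does not give an explicit proof of this observation at all, merely stating that it ``follows directly from the definitions,'' and your argument is precisely the unpacking of those definitions (small-block uniqueness for (a), the trivial partition for (b), and the quartet-encoding theorem for (c)). Your flag on part (b) is also apt: under the paper's own convention that the size of a RAF is its number of components, the statement should read ``MRAF is $1$'' rather than ``MRAF is $0$''; this appears to be a slip in the paper.
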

%
%
\begin{observation}
\label{obs:unbounded}
There are instances where MAF is  arbitrarily large, $\Omega(n)$, while MRAF is constant.
\end{observation}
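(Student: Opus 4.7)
The plan is to exhibit, for each even $n$, an explicit pair of caterpillars $(T_1, T_2)$ on $X = \{1, 2, \ldots, n\}$ with $\mathrm{MRAF} = 2$ and $\mathrm{MAF} = \Omega(n)$. I would take $T_1$ to be the identity caterpillar (leaf order $1, 2, \ldots, n$) and $T_2$ the caterpillar with leaf order $1, 3, 5, \ldots, n-1, 2, 4, \ldots, n$ (odds ascending, then evens ascending).

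For the $\mathrm{MRAF} \leq 2$ direction, I will consider the partition $\{O, E\}$ with $O = \{1, 3, \ldots, n-1\}$ and $E = \{2, 4, \ldots, n\}$. The odd leaves appear in the same relative order in both trees (namely $1, 3, 5, \ldots, n-1$), so $T_1 | O = T_2 | O$; symmetrically $T_1 | E = T_2 | E$. Hence $\{O, E\}$ is a RAF. Since $T_1 \neq T_2$ (for instance, $12|34$ is a quartet of $T_1$ but $13|24$ is the corresponding quartet of $T_2$), Observation~\ref{obs:tiny}(b) rules out a $1$-component RAF, giving $\mathrm{MRAF} = 2$.

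For the $\mathrm{MAF} = \Omega(n)$ direction, I would use Observation~\ref{obs:tiny}(c). A direct quartet calculation should show that any non-singleton AF component $S$ with $|S| \geq 4$ and $T_1|S = T_2|S$ must be entirely in $O$, entirely in $E$, or satisfy $\max(S \cap O) < \min(S \cap E)$ (with one small exceptional case at $|S| = 4$ where also $\max(S \cap E) < \min(S \cap O)$ is allowed). In all of these cases the leaves of $S$ occupy at most every other integer of the position range, yielding $|S| \leq (\max S - \min S + 1)/2 + 1$. Condition~$2$ of the AF definition forces the position intervals $[\min S_i, \max S_i]$ of distinct non-singleton components to be pairwise disjoint inside $\{1, \ldots, n\}$, so summing gives $\sum_{|S_i|\geq 4} |S_i| \leq n/2 + k$, where $k$ counts those components. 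Small mixed non-singleton components of size $2$ or $3$ that do not satisfy the separation property have $T_2$-backbone span at least roughly $n/2$, so by disjointness of $T_2$-ranges at most $O(1)$ of them can coexist. Since every remaining leaf must be a singleton, the total component count is at least $n/2 - O(1) = \Omega(n)$. The principal obstacle will be the quartet-based case analysis distinguishing valid non-singleton components, and in particular handling the small (size $2$ or $3$) mixed components that are trivially condition-$1$ valid and therefore require the separate $T_2$-backbone disjointness argument to bound their contribution.
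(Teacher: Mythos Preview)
Your construction is correct---the pair of caterpillars you describe does have $\mathrm{MRAF}=2$ and $\mathrm{MAF}=\Omega(n)$---but the route is considerably more involved than the paper's, and your sketch of the $\mathrm{MAF}$ lower bound contains a gap. Your quartet characterization of size-$\geq 4$ homeomorphic components is incomplete: it misses cases arising from end-cherry flexibility. For instance, $S=\{2,3,4,6,8\}$ has $T_1|S=T_2|S$ (both are the caterpillar with cherry $\{2,3\}$ at one end), yet $\max(S\cap O)=3>2=\min(S\cap E)$ and $|S|=5$, so neither your classification nor your density bound $|S|\leq(\max S-\min S+1)/2+1$ holds (the latter gives $4.5$). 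The fix is not to refine the case analysis but to bypass it: \emph{any} non-singleton mixed component has $T_2$-spine interval crossing the midpoint between the odd and even blocks, so by condition~2 in $T_2$ there is at most one mixed non-singleton component in total (not just among the size-$2,3$ ones). All remaining non-singleton components are pure odd or pure even, for which $|S_i|\leq\lceil\ell_i/2\rceil$ is immediate; summing over disjoint $T_1$-intervals then yields $\mathrm{MAF}\geq n/2 - O(1)$.

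The paper's proof is entirely different and much shorter. It takes an arbitrary base tree on $n$ taxa and replaces each leaf $x$ by a four-leaf subtree on $\{a_x,b_x,c_x,d_x\}$, with cherries $\{a_x,b_x\},\{c_x,d_x\}$ in $T_1$ but $\{a_x,c_x\},\{b_x,d_x\}$ in $T_2$. Then $\mathrm{MRAF}=2$ via $\{\text{all }a_x,b_x\}$ and $\{\text{all }c_x,d_x\}$, while $\mathrm{MAF}\geq n$ because each quadruple $\{a_x,b_x,c_x,d_x\}$ forces at least one singleton in any AF. This avoids all interval-counting and cherry-flexibility complications; your caterpillar construction buys nothing extra and costs a substantially harder analysis.
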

\begin{proof}
Let $T$ be an arbitrary unrooted phylogenetic binary tree on $n$ taxa. We create two trees $T_1$ and $T_2$, both on $4n$ taxa. We build $T_1$ by replacing each leaf $x$ in $T$ with a subtree on $\{a_x, b_x, c_x, d_x\}$ in which $a_x, b_x$ form a cherry and $c_x, d_x$ form a cherry. The construction of $T_2$ is similar except that $a_x, c_x$ form a cherry and $b_x, d_x$ form a cherry. Note that $T_1|\{a_x, b_x, c_x, d_x\} \neq T_2|\{a_x, b_x, c_x, d_x\}$. MRAF here is 2 because we can take one component containing all the $a_x, b_x$ taxa and one containing all the $c_x, d_x$ taxa. However, MAF is at least $n$. This is because in any AF
at least one of the four taxa in $\{a_x, b_x, c_x, d_x\}$ must be a singleton component, 
and there are $n$ subsets of the form $\{a_x, b_x, c_x, d_x\}$. \hfill $\qed$
\end{proof}

Given two trees $T_1, T_2$ on $X$ we say that $X' \subseteq X$ induces a \emph{maximum agreement subtree} (MAST) if $T_1|X' = T_2|X'$ and $X'$ has maximum cardinality ranging over all such subsets. Clearly, $\lceil \frac{n}{MAST}\rceil$ is a lower bound on MRAF, since each component of a RAF is no larger than a MAST. A MAST can be computed in polynomial time \cite{STEEL199377}. The trivial upper bound on MRAF of $\lceil \frac{n}{3} \rceil$ (see Observation \ref{obs:tiny}), which already contrasts sharply with the fact that the MAF of two trees can be as large as $n(1 - o(1))$ \cite{atkins2019extremal}, can easily be strengthened via MASTs. For example, it can be verified computationally or analytically that for any two trees on 6 or more taxa, a MAST has size at least 4. We can thus repeatedly choose and remove a homeomorphic size-4 subtree, until there are fewer than 6 taxa left, giving a loose upper bound on MRAF of $n/4 + 2$. In fact, it is known that the size of a MAST on two trees with $n$ leaves is $\Omega(\log n)$ \cite{markin2020extremal} (and that this bound is asymptotically tight). In particular, the lower bound on MAST grows to infinity as $n$ grows to infinity. Hence, the upper bound of $n/4 + 2$ can be strengthened to $n/c + f(c)$ for any arbitrary constant $c>1$ where $f$ is a function that only depends on $c$; this is thus $n/c + O(1)$. In fact, by iteratively removing $\Omega(\log n')$ of the \emph{remaining} number of taxa $n'$ we obtain a (slightly) sublinear upper bound on the size of a MRAF.
Namely, while $n' \geq \log n + O(1)$, each iteration removes at least $d \log n' \geq d \log \log n$ leaves for some constant $d$, giving an upper bound of $\frac{n}{d\log \log n} + \log n + O(1)$ which is  $O(\frac{n}{\log \log n})$.

Regarding lower bounds, it is easy to generate pairs of
trees on $n$ leaves where a MAST has at most $O(\log n)$ leaves
\cite{kubicka1992agreement,markin2020extremal}.
A MRAF for such tree pairs will thus have size $\Omega(\frac{n}{\log n}$).

\section{Hardness of MRAF}
\label{sec:np-hard}

We discuss a related NP-hard problem regarding partitioning permutations~\cite{Wagner84}.

\prob{\textsc{Partition into Monotone Subsequences (PIMS)}}
{A permutation $\pi$ of $\{1, \ldots, n\}$, and a number $k$.}
{Partition $\{1, \ldots, n\}$ into at most $k$ sets such that each set occurs monotonically in $\pi$, i.e., either as an increasing or a decreasing sequence.}

Due to the classical Erd{\H{o}}s~Szekeres Theorem~\cite{ErdosSzekeres1935}, for any $n$-element permutation there is a monotone sequence in $\pi$ with at least $\sqrt{n}$ elements. 
This can be used to efficiently partition $\pi$ into at most $2\sqrt{n}$ monotone sequences~\cite{Bar-YehudaF98}. 
Thus, we may assume that the $k$ in the problem statement is always at most $2\sqrt{n}$.  



\begin{theorem}
MRAF is NP-hard.
\end{theorem}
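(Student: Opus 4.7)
The plan is to reduce in polynomial time from \textsc{PIMS}, the NP-hard problem introduced just above. Given a permutation $\pi$ of $\{1,\ldots,n\}$ together with a bound $k$, I will construct a pair of unrooted binary phylogenetic trees $T_1, T_2$ on a common leaf set $X$ together with a target $k'$, so that $\pi$ admits a partition into at most $k$ monotone subsequences iff $T_1, T_2$ admit a RAF of size at most $k'$. The natural encoding uses caterpillars: take $T_1$ to be the identity caterpillar on $\{1,\ldots,n\}$ and $T_2$ to be the caterpillar listing leaves in the order $\pi(1),\pi(2),\ldots,\pi(n)$.

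The key structural observation I would invoke is Observation~\ref{obs:tiny}(c): for caterpillars and any $S \subseteq X$ with $|S|\geq 4$, $T_1|S=T_2|S$ exactly when $T_1|S$ and $T_2|S$ share the same set of quartets, which for induced caterpillars translates to the condition that the orderings of $S$ in $T_1$ and in $T_2$ agree up to reversal and possibly a swap of the two leaves in each endpoint cherry. This gives a direct bridge between RAF components and monotone subsequences of $\pi$. I would then execute the reduction in three steps: (i) fix the precise construction, blowing each element $i\in\{1,\ldots,n\}$ up into a constant-size bundle of consecutive leaves in both caterpillars (and possibly attaching a constant number of anchor leaves at the ends of both trees); (ii) prove the forward direction by turning every monotone subsequence of $\pi$ into a single RAF component containing all leaves of the corresponding bundles; (iii) prove the backward direction by reading off from each RAF component the associated set of indices of $\pi$ and arguing that it forms a monotone subsequence.

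The main obstacle is the endpoint-cherry flexibility: two induced caterpillars produce the same quartet set even after independent swaps within each endpoint cherry, so that on a plain caterpillar encoding a RAF component might correspond to an ``almost monotone'' subsequence rather than a strictly monotone one. Without a fix this would only yield an approximation-preserving reduction, not NP-hardness. Bundling each element into a small cluster of consecutive leaves is the tool I would use to neutralise this: when enough leaves per bundle are present and a RAF component spans several bundles, the two endpoint cherries of $T_1|S$ and $T_2|S$ are pinned \emph{inside} specific bundles, so the inter-bundle ordering becomes rigid up to global reversal and must describe a strictly monotone subsequence of $\pi$. The delicate point will be handling RAF components that use only a proper subset of the leaves of some bundle: here a careful combinatorial argument is needed to show that either the component can be modified without increasing the RAF size, or that the underlying index set is still monotone. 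This bookkeeping is where I expect the bulk of the technical work to go, and from it the equivalence $\mathrm{MRAF}(T_1,T_2) \leq k'$ iff $\mathrm{PIMS}(\pi) \leq k$ follows, establishing NP-hardness.
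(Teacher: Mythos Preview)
Your high-level plan---reduce from \textsc{PIMS} via two caterpillars encoding the identity and $\pi$, then neutralise the endpoint-cherry ambiguity---is exactly the paper's, and you have correctly isolated the real obstacle. The gap is in the proposed fix. Bundling each element into $c\ge 2$ consecutive leaves breaks the \emph{forward} direction for decreasing subsequences: if the bundles appear with the same internal order in both caterpillars, then for a decreasing subsequence the induced caterpillar in $T_2$ lists the bundles in reverse order but each bundle still in its original internal order, whereas reversing $T_1|S$ reverses every bundle internally as well. Concretely, with $c=2$ and three bundles the quartet on $\{b_1,a_2,b_2,a_3\}$ is $b_1a_2\,|\,b_2a_3$ in $T_1$ but $a_2a_3\,|\,b_1b_2$ in $T_2$, so $T_1|S\neq T_2|S$. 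Repairing this by selecting only one leaf per bundle sends the backward direction straight back to the unbundled ``almost monotone'' situation you were trying to escape, and your suggested normalisation (``modify the component without increasing the RAF size'') is precisely the step that does not go through for free.

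The anchor idea is closer to what actually works, but a \emph{constant} number of anchors cannot suffice: the cherry ambiguity arises once per component, so $\Omega(k)$ pinning gadgets are needed. The paper attaches $8k$ auxiliary caterpillars (of sizes $2\alpha$ and $2\beta$) at the two ends, producing $\Theta(k^2)$ extra leaves, and moreover builds a \emph{separate} instance for every split $\alpha+\beta=k$ with $\alpha,\beta\ge 1$. The backward direction is then a counting argument: two lemmas show that every component of a size-$k$ RAF must pick exactly $8k$ anchor leaves drawn from exactly four of eight designated leaf groups, and that the only two admissible quadruples force the permutation-encoding part of the component to be strictly increasing (at most $\alpha$ such components) or strictly decreasing (at most $\beta$). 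This gadgetry---not bundling---is what closes the gap between ``almost monotone'' and ``monotone''.
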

\begin{proof}
Let $(\pi,k)$ be an input to the PIMS problem, i.e., $k$ is an integer greater than 1 and $\pi$ is a permutation of $\{1, \ldots, n\}$, where we use $\pi_i$ to denote the $i$th element of $\pi$. 
As remarked before, $k$ is at most $2\sqrt{n}$. 
This will imply that our constructed instance of \AgreeForest will have linear size in terms of the given permutation $\pi$, and as such any lower bounds, e.g., arising from the Exponential Time Hypothesis (ETH), will carry over from the PIMS problem to the \AgreeForest problem. 
For each pair of integers $(\alpha,\beta)$ where $\alpha+\beta=k$ and $\alpha, \beta \geq 1$\footnote{$\alpha=0$ or $\beta=0$ makes the problem easy.}, we will construct an instance $(T_1, T_2)$ of \AgreeForest such that $(T_1,T_2)$ has a solution consisting of $k$ trees if and only if $\pi$ can be partitioned into $\alpha$ increasing sequences and $\beta$ decreasing sequences. 
The trees $T_1$ and $T_2$ are described as follows. 

Recall that a \emph{caterpillar} is a tree $T$ where the subtree obtained by removing all leaves of $T$ is a path. 
The path here is called the \emph{spine} of the caterpillar.  Note that, in the caterpillars used to construct $T_1$ and $T_2$, some spine vertices will have degree~2. 
However, to make proper binary trees one should contract any such vertex into one of its neighbors.

We first construct a leaf set $v_1, \ldots, v_n$ corresponding to the permutation. 
We create an identity caterpillar $I$ whose spine is the $n$-vertex path $(x_1, \ldots, x_n)$ such that $x_i$ is adjacent to $v_i$. 
Next, we create a caterpillar $P$ whose spine is the $n$-vertex path $(y_1, \ldots, y_n)$ such that $y_i$ is adjacent to $v_{\pi_i}$. 
Observe that already for the \AgreeForest instance $(I,P)$, any $(r,s)$ partition of $\pi$ leads to a solution to $(I,P)$ consisting of $k$ trees. 
However, the converse is not yet enforced. 
In particular, if the input to MRAF is $(I,P)$,
then the components in a MRAF (which are  caterpillars) have cherries at their ends which, crucially, might be ordered differently in $I$ than in $P$. This can violate monotonicity.
\begin{figure}[bt]
        \includegraphics[scale=0.7]{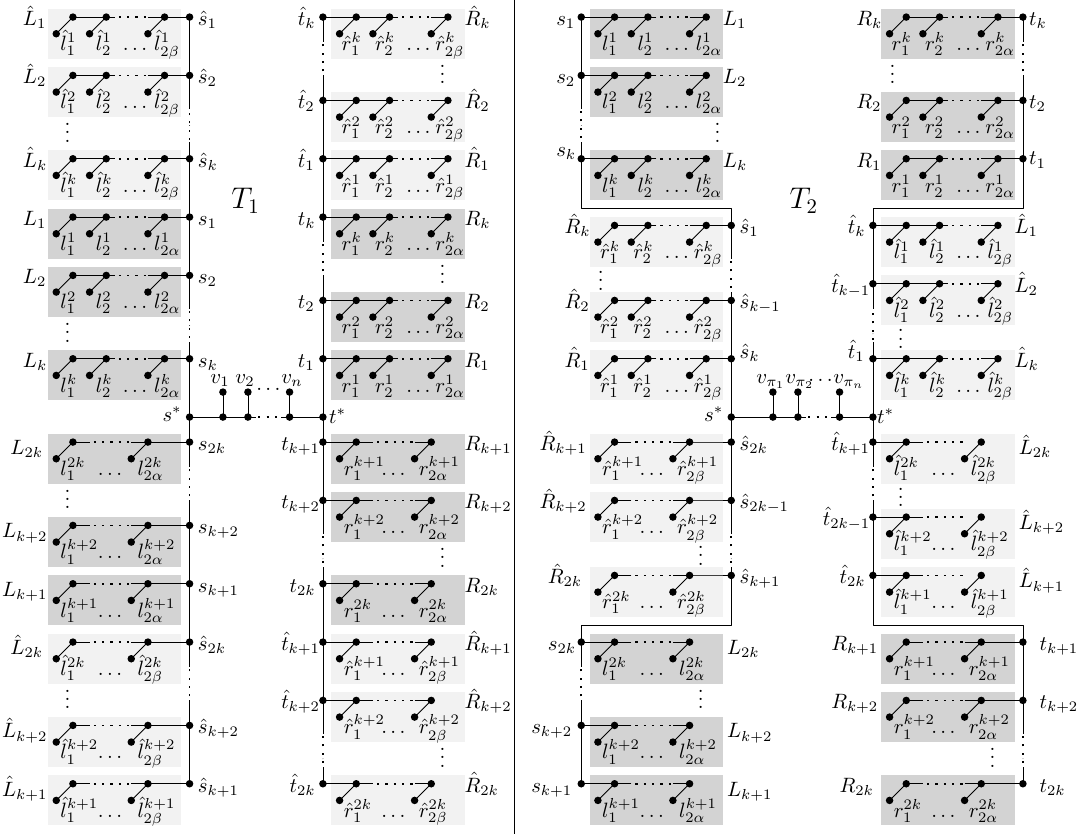}
    \caption{The two trees $T_1, T_2$ constructed from an instance of PIMS in the NP-hardness proof. The dark (light) grey leaves are used to induce increasing (decreasing) subsequences in the permutation-encoding taxa in the centre of the trees. }
    \label{fig:hardness}
\end{figure}
To counter this we extend 
$I$ and $P$ to obtain $T_1, T_2$ as shown in Fig.~\ref{fig:hardness}.  
For $T_1$, we construct $8k$ caterpillars. 
First, for the increasing sequences, we construct $4k$ caterpillars $L_1, \ldots, L_{2k},$ $R_1, \ldots, R_{2k}$ each having $2\alpha$ leaves and $2\alpha$ spine vertices.
Namely,~for~each~$i$, 
\begin{itemize}
    \item $L_i$ is the caterpillar with leaf set $\{l^i_1, \ldots, l^i_{2\alpha}\}$ and spine $(w^i_1, \ldots, w^i_{2\alpha})$ where, for each $j$, $l^i_j$ is adjacent to $w^i_j$; and
    \item $R_i$ is the caterpillar with leaf set $\{r^i_1, \ldots, r^i_{2\alpha}\}$ and spine $(z^i_1, \ldots, z^i_{2\alpha})$ where, for each $j$, $r^i_j$ is adjacent to $z^i_j$. 
\end{itemize}
For the decreasing sequences, we similarly construct $4k$ caterpillars $\hat{L}_1, \ldots, \hat{L}_{2k},$ $\hat{R}_1, \ldots, \hat{R}_{2k}$ each having $2\beta$ leaves and $2\beta$ spine vertices.
Namely, for each $i$, 
\begin{itemize}
    \item $\hat{L}_i$ is the caterpillar with leaf set 
    $\{\hat{l}^i_1, \ldots, \hat{l}^i_{2\beta}\}$ and spine $(\hat{w}^i_1, \ldots, \hat{w}^i_{2\beta})$ where, 
    for each $j$, $\hat{l}^i_j$ is adjacent to $\hat{w}^i_j$; and
    \item $\hat{R}_i$ is the caterpillar with leaf set $\{\hat{r}^i_1, \ldots, \hat{r}^i_{2\beta}\}$ and spine $(\hat{z}^i_1, \ldots, \hat{z}^i_{2\beta})$ where, for each $j$, $\hat{r}^i_j$ is adjacent to $\hat{z}^i_j$. 
\end{itemize}
To form $T_1$, we create two $(4k+1)$-paths $Q_{\mathrm{start}} = (\hat{s}_1, \ldots, \hat{s_k},$ $s_1, \ldots, s_{k},$ $s^*,$ 
$s_{2k}, \ldots, s_{k+1},$ 
$\hat{s}_{2k}, \ldots, \hat{s}_{k+1})$ and 
$Q_{\mathrm{end}} = (\hat{t}_k, \ldots, \hat{t}_1)$
$t_k, \ldots, t_1,$ $t^*,$ 
${t}_{k+1}, \ldots, \hat{t}_{2k}$
$\hat{t}_{k+1}, \ldots, \hat{t}_{2k})$ 
such that $s^*$ is adjacent to $x_1$ (i.e., to the ``start'' of $I$) and $t^*$ is adjacent to $x_n$ (i.e., to the ``end'' of $I$), and 
for each $i \in \{1, \ldots, 2k\}$:
\begin{itemize}
    \item $s_i$ is adjacent to $w^i_{2\alpha}$, i.e., the ``end'' of $L_i$ is attached to $s_i$, and $t_i$ is adjacent to $z^i_{1}$, i.e., the ``start'' of $R_i$ is attached to $t_i$; and 
    \item $\hat{s}_i$ is adjacent to $\hat{w}^i_{2\alpha}$, i.e., the ``end'' of $\hat{L}_i$ is attached to $\hat{s}_i$, and $\hat{t}_i$ is adjacent to $\hat{z}^i_{1}$, i.e., the ``start'' of $\hat{R}_i$ is attached to $\hat{t}_i$. 
\end{itemize}

To build $T_2$, we use the same $8k$ caterpillars $L_i, R_i, \hat{L}_i, \hat{R}_i$ but attach them differently to the ``central'' path $P$ of $T_2$.
First we make an adjustment to $Q_{\mathrm{start}}$ and $Q_{\mathrm{end}}$. In $T_2$, these become: 
$Q_{\mathrm{start}} = (s_1, \ldots, s_{k},$ 
$\hat{s}_1, \ldots, \hat{s_k},$ $s^*,$
$\hat{s}_{2k}, \ldots, \hat{s}_{k+1}$
$s_{2k}, \ldots, s_{k+1})$ and 
$Q_{\mathrm{end}} = (t_k, \ldots, t_1,$ 
$\hat{t}_k, \ldots, \hat{t}_1, $ $t^*,$ 
$\hat{t}_{k+1}, \ldots, \hat{t}_{2k},$ 
${t}_{k+1}, \ldots, \hat{t}_{2k})$ -- this swap is done to highlight that in $T_2$ the $\hat{L}_i, \hat{R}_i$ caterpillars are closer to the central path $P$ than the $L_i, R_i$ caterpillars. 
Similar to $T_1$, in $T_2$, we have $s^*$ adjacent to $y_1$ (i.e., the ``start'' of $P$) and $t^*$ is adjacent to $y_n$ (i.e., the ``end'' of $P$). 
The next part is where we see a difference regarding how we attach the caterpillars ($L_i, R_i$) of the increasing sequences vs. those ($\hat{L}_i, \hat{R}_i$) of decreasing sequences. 
For each $i \in \{1, \ldots, 2k\}$: 
\begin{itemize}
\item $s_i$ is adjacent to $w^i_{1}$, i.e., the ``start'' of $L_i$ is attached to $s_i$ and as such $L_i$ occurs ``reversed'' in $T_2$ with respect to $T_1$, and 
\item $t_i$ is adjacent to $z^i_{2\alpha}$, i.e., the ``end'' of $R_i$ is attached to $t_i$. 
\end{itemize}
For each $i \in \{1, \ldots, k\}$: 
\begin{itemize}
\item $\hat{s}_{k-i+1}$ ($\hat{s}_{2k-i+1}$) is adjacent to $\hat{z}^{i}_{2\beta}$ ($\hat{z}^{k+i}_{2\beta}$), i.e., the ``end'' of $\hat{R}_i$ ($\hat{R}_{i+k}$) is attached to $\hat{s}_{k-i+1}$ (and $\hat{s}_{2k-i+1}$) and as such $\hat{R}_i$ ($\hat{R}_{k+i}$) occurs ``on the opposite side'' in $T_2$ with respect to its location in $T_1$, and 
\item $\hat{t}_{k-i+1}$ ($\hat{t}_{2k-i+1}$) is adjacent to $\hat{w}^i_{1}$ ($\hat{w}^{k+i}_{1}$), i.e., the ``start'' of $\hat{L}_i$ ($\hat{L}_{k+i}$) is attached to $\hat{t}_{k-i+1}$ ($\hat{t}_{2k-i+i}$). 
\end{itemize}
This completes the construction of $T_1$ and $T_2$ from $\pi$. 
It is easy to see that this construction can be performed in polynomial time and that our trees contain precisely $16k^2 + 8k + 4 + 4n$ vertices, i.e., since $k \leq 2 \sqrt{n}$, our instance of \AgreeForest has $O(n)$ size. 

Suppose $\pi$ can be partitioned into $\alpha$ increasing sequences $\tau_1, \ldots, \tau_\alpha$ and $\beta$ decreasing sequences $\sigma_1, \ldots, \sigma_\beta$. 
The leaf set corresponding to $\tau_i$ consists of $\{v_{p} : p \in \tau_i\}$ together with two leaves from each of $L_j$ and $R_j$ $(j \in \{1, \ldots, 2k\})$, i.e., $l^j_{2i-1}, l^j_{2i}, r^j_{2i-1}, r^j_{2i-1}$.
Similarly, the leaf set corresponding to $\sigma_i$ consists of $\{v_{p} : p \in \sigma_i\}$ together with two leaves from each of $\hat{L}_j$ and $\hat{R}_j$ $(j \in \{1, \ldots, 2k\})$, i.e., $\hat{l}^j_{2i-1}, \hat{l}^j_{2i}, \hat{r}^j_{2i-1}, \hat{r}^j_{2i}$. It can be verified that this is a valid solution
to \AgreeForest.

Now suppose that we have a solution $S_1, \ldots, S_k$ to \AgreeForest$(T_1,T_2)$. 
We need to show that this leads to a solution to the PIMS problem on $\pi$ consisting of (at most) $\alpha$ increasing sequences and (at most) $\beta$ decreasing sequences. The proof of the following lemma is in the appendix.

\begin{lemma}\label{lem:threeleaves}
If some $S_j$ uses three leaves of any caterpillar $C \in \{L_i,R_i,\hat{L}_i,\hat{R}_i : i \in \{1, \ldots, 2k\}\}$ then all elements of $S_j$ are leaves of $C$. 
\end{lemma}

A consequence of this lemma is that if some $S_j$ uses more than two leaves from any single one of our left/right caterpillars, then $S_j$ can contain at most $\max\{2\alpha,2\beta\} < 2k$ elements. 
In the next part we will see that every $S_j$ must contain precisely $8k$ leaves from our left/right caterpillars in order to cover them all. 
In particular, this means that no $S_j$ contains more than two leaves from any single left/right caterpillar. 
Note that, the total number of leaves is 
$n + 4k \cdot 2\alpha + 4k \cdot 2\beta = n + 8k^2$ where the set of $n$ leaves is $\{v_1, \ldots, v_n\}$ (i.e., corresponding to the permutation) and the $8k^2$ leaves are the leaves of the left/right caterpillars. We now define the following eight leaf sets related to our caterpillars $L_i, R_i, \hat{L}_i, \hat{R}_i$. 
\begin{itemize}
\setlength{\itemsep}{0pt}
    \item $\mathcal{L}_1 = \{l : l $ is a leaf of some $L_i$, $i \in \{1, \ldots, k\}\}$, 
    \item $\mathcal{L}_2 = \{l : l $ is a leaf of some $L_i$, $i \in \{k+1, \ldots, 2k\}\}$, 
    \item $\mathcal{R}_1 = \{r : r $ is a leaf of some $R_i$, $i \in \{1, \ldots, k\}\}$, 
    \item $\mathcal{R}_2 = \{r : r $ is a leaf of some $R_i$, $i \in \{k+1, \ldots, 2k\}\}$. 
\end{itemize}
The definition of $\hat{\mathcal{L}}_1, \hat{\mathcal{L}}_2, \hat{\mathcal{R}}_1, \hat{\mathcal{R}}_2$ is analogous.
The proof of the following is also deferred to the appendix.
\begin{lemma}\label{lem:5sets}
No $S_j$ can contain five elements where each one belongs to a different set among: $\mathcal{L}_1, \mathcal{L}_2, \mathcal{R}_1, \mathcal{R}_2, \hat{\mathcal{L}}_1, \hat{\mathcal{L}}_2, \hat{\mathcal{R}}_1, \hat{\mathcal{R}}_2$. 
\end{lemma}

Now, observe that a component $S_j$ can contain
at most $2k$ taxa from each of the 8 sets listed above. That is because each of the 8 sets is formed from $k$ caterpillars (e.g., $\mathcal{L}_1$ is formed from the caterpillars $L_1, ..., L_k$)  and each of these $k$ caterpillars contributes at most 2 taxa to a RAF component. (If one of the $k$ caterpillars contributed more than 2 taxa, we would automatically be limited to at most $2k$ taxa, by Lemma \ref{lem:threeleaves}.) It follows from this that a component $S_j$ can in total intersect with at most $4 \times 2k = 8k$ taxa ranging over all the 8 sets: intersecting with more would require intersecting with at least 5 of the 8 sets, which as we have shown in Lemma \ref{lem:5sets} is not possible.

Given that there are $k$ components in the RAF, and $T_1, T_2$ have $n + 8k^2$ taxa, each of the $k$ components must
therefore contain \emph{exactly} $8k$ taxa from the 8 sets, and each component intersects
with \emph{exactly} 4 of the 8 sets (as this is the only way to achieve $8k$). In the appendix we prove that the only way for $S_j$ to intersect with four sets \emph{and} a permutation-encoding taxon $v_i$, is if the four sets are
$\{\mathcal{L}_1, \mathcal{L}_2, \mathcal{R}_1, \mathcal{R}_2 \}$ or $\{\hat{\mathcal{L}}_1, \hat{\mathcal{L}}_2, \hat{\mathcal{R}}_1, \hat{\mathcal{R}}_2 \}$. The permutation-encoding taxa $v_i$ contained in components
of the first type, necessarily induce increasing subsequences, and those contained in the second type are descending. There can be at most $\alpha$ components of the first type, and at most $\beta$ of the second, which means that the permutation $\pi$ can be partitioned into at most $\alpha$ increasing and $\beta$ decreasing sequences.
This concludes the proof. \hfill $\qed$
\end{proof}

\section{Exact algorithms}


We now observe a single-exponential exact algorithm for MRAF and then show that when one input tree is a caterpillar, MRAF is in XP parameterized by $k$.

Recall that the NP-hard Set Cover problem $(U,F)$, where $F$ consists of subsets of $U$, is to compute a minimum-size subset of $F$ whose union is $U$. 
\begin{observation}
\label{obs:sc}
Let $T_1, T_2$ be two unrooted binary phylogenetic trees on $X$. Let $U=X$ and let $F$ be the set of all subsets of $X$ that induce homeomorphic trees in $T_1, T_2$. Each RAF of $T_1, T_2$ is a set cover of $(U,F)$, and each set cover of $(U,F)$ can be transformed in polynomial time into a RAF of $T_1, T_2$ with the same or smaller size, by allocating each element of $X$ to exactly one of the selected subsets. In particular, any optimum solution to the set cover instance $(U,F)$ can be transformed in polynomial time to yield a MRAF of $T_1, T_2$ of the same size. 
\end{observation}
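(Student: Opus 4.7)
The plan is to verify the two directions of the equivalence separately, and then to deduce the ``in particular'' clause from them.

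For the forward direction, I would observe that it is essentially by definition: if $\{S_1, \ldots, S_k\}$ is a RAF of $T_1, T_2$, then by condition (1) of a RAF each $S_i$ satisfies $T_1|S_i = T_2|S_i$, so $S_i \in F$; and because a RAF is a partition of $X$, the sets $S_1, \ldots, S_k$ cover $U = X$. Hence $\{S_1, \ldots, S_k\}$ is a set cover of $(U,F)$ of the same size. Nothing nontrivial happens here.

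For the backward direction, let $\mathcal{C} = \{C_1, \ldots, C_m\}$ be a set cover of $(U,F)$. I would define a partition $\{S_1, \ldots, S_m\}$ by arbitrarily allocating each $x \in X$ to some single $S_i$ with $x \in C_i$ (such an $i$ exists because $\mathcal{C}$ covers $X$), and then discarding any $S_i$ that ends up empty. The resulting family is a partition of $X$ into at most $m$ nonempty blocks, and each $S_i \subseteq C_i$. The main (and only slightly nontrivial) step is the following closure-under-restriction claim: if $T_1|C_i = T_2|C_i$ and $S_i \subseteq C_i$, then $T_1|S_i = T_2|S_i$. The cleanest way I would argue this is via Observation~\ref{obs:tiny}(c): since $T_1|C_i$ and $T_2|C_i$ induce the same quartets, they induce the same quartets on every four-element subset of $S_i$, so $T_1|S_i$ and $T_2|S_i$ have identical quartet sets and are therefore equal. (One could equivalently restrict an isomorphism $T_1|C_i \to T_2|C_i$ to the leaves in $S_i$.) Hence the partition is a RAF of size at most $m$, and the allocation can clearly be done in polynomial time.

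The ``in particular'' clause now follows immediately: taking an optimum set cover and applying the construction yields a RAF of size at most the set cover optimum, while the forward direction shows a MRAF gives a set cover of size equal to MRAF; combining the two inequalities shows both optima coincide. I do not anticipate any real obstacle here; the only point worth being careful about is the restriction claim, but the quartet characterization in Observation~\ref{obs:tiny}(c) makes it a one-line argument.
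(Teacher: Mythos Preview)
Your proposal is correct and matches the paper's approach exactly: the paper states this as an observation with no separate proof, embedding the allocation idea (``by allocating each element of $X$ to exactly one of the selected subsets'') directly in the statement. Your write-up simply makes explicit the one nontrivial point the paper leaves implicit, namely that $F$ is closed under taking subsets, which you justify via Observation~\ref{obs:tiny}(c); this is indeed the natural argument.
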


\begin{lemma}
\label{lem:c_to_the_n}
MRAF can be solved in time $O(c^n)$, $n=|X|$, for some constant $c$.
\end{lemma}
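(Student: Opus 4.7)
The plan is to combine Observation~\ref{obs:sc} with a standard single-exponential dynamic programming algorithm for \textsc{Set Cover}. First I would set $U = X$ and construct the family $F = \{X' \subseteq X : T_1|X' = T_2|X'\}$ by enumerating every subset of $X$ and testing homeomorphism in polynomial time (for instance, via the quartet characterisation mentioned earlier, or by directly building $T_1|X'$ and $T_2|X'$ and comparing them). This enumeration costs $O(2^n \cdot \mathrm{poly}(n))$ time and produces $F$ explicitly.

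Next I would solve the resulting \textsc{Set Cover} instance $(U,F)$ by the classical subset-DP: define $g(\emptyset)=0$ and
\[
g(S) \;=\; 1 + \min_{\substack{A \in F \\ A \subseteq S}} g(S \setminus A)
\]
for every nonempty $S \subseteq X$. Filling this table in order of increasing $|S|$, the total work is bounded by $\sum_{S \subseteq X} 2^{|S|} = 3^n$, up to polynomial factors, yielding an $O(c^n)$ running time for $c=3$ (if one prefers a tighter bound, the Björklund--Husfeldt--Koivisto inclusion--exclusion algorithm gives $O^\ast(2^n)$, but any single-exponential bound suffices). The minimum value of $g(X)$ is the optimal \textsc{Set Cover} size, and a standard traceback retrieves an optimal cover $F^\ast \subseteq F$.

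Finally I would invoke the second half of Observation~\ref{obs:sc} to convert $F^\ast$ into a RAF of $T_1, T_2$ of size at most $|F^\ast|$: for each $x \in X$, assign $x$ to exactly one chosen set in $F^\ast$ containing it, obtaining a partition of $X$ whose blocks are subsets of sets in $F^\ast$ and therefore still induce homeomorphic subtrees in $T_1$ and $T_2$. Because every RAF is itself a set cover of $(U,F)$, this partition has minimum size over all RAFs, and the whole procedure runs in $O(c^n)$ time. The only substantive step is noting that \textsc{Set Cover} over an $n$-element universe admits a single-exponential algorithm; the rest is bookkeeping through Observation~\ref{obs:sc}.
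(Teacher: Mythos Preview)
Your proposal is correct and follows essentially the same approach as the paper: reduce to \textsc{Set Cover} via Observation~\ref{obs:sc} and then invoke a single-exponential set-cover algorithm. The paper's own proof is even terser, simply noting that $|U|=n$, $|F|\le 2^n$, and citing the Bj\"orklund--Husfeldt--Koivisto $O(2^{|U|}\cdot(|U|+|F|)^{O(1)})$ bound; your explicit $3^n$ subset-DP (with BHK mentioned as an alternative) and the traceback to a RAF are just a more detailed rendering of the same idea.
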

\begin{proof}
The construction in Observation \ref{obs:sc} yields $|U|=n$ and $|F| \leq 2^n$. 
Minimum set cover can be solved in time $O( 2^{|U|} \cdot (|U|+|F|)^{O(1)})$
 thanks to~\cite{BjorklundHK09}. \hfill $\qed$
 \end{proof}

Lemma \ref{lem:c_to_the_n} concerns general instances.
When one of the given trees is a caterpillar, we can place \AgreeForest into XP (parameterized by the solution size $k$). 
We use dynamic programming for this.
We will assume that $n>3k$, as otherwise an arbitrary partition $S_1,\ldots,S_k$ where each $S_i$ has at most three taxa is a MRAF.
For $n>3k$ it follows that if there is a MRAF for $T_1$ and $T_2$, then there always is a MRAF $S_1,\ldots,S_k$ where no $S_i$ is a singleton. To see this, observe that for any MRAF with a singleton $S_i$, it must contain a component $S_j$ with $|S_j|\geq 3$ (since $n>3k$), and moving any element from $S_j$ to $S_i$ gives another MRAF where $S_i$ is not a singleton.

We let $T_1$ be the caterpillar, and $T_2$ an arbitrary tree. 
Similarly to our hardness result, we consider, without loss of generality, $T_1$ to consist of a spine (a path) $(y_1,\ldots,y_n)$ and leaves $v_1,\ldots,v_n$, where leaf $v_i$, $i=1,\ldots,n$ is adjacent to vertex $y_i$. See Fig.~\ref{fig:exact_algorithm_for_caterpillar} for an illustration.
The spine naturally orders the leaves (up to arbitrarily breaking ties on the end cherry taxa) and this will guide our dynamic-programming approach. 
We write $u\prec v$ for two leaves $u$ and $v$, if $u$ appears before $v$ in the considered ordering along the spine of $T_1$.
We decide whether a MRAF $S_1,\ldots,S_k$ of $T_1$ and $T_2$ exists as follows: we enumerate over all possible pairs of vertices $l_i, r_i$, $i=1,\ldots,k$, and check (compute) whether there exists a MRAF where the first leaf of $S_i$, $i=1,\ldots,k$, is $l_i$ and the last leaf of $S_i$ is $r_i$. We call such MRAF a \emph{MRAF constrained by $l_i,r_i$, $i=1,\ldots,k$}, or simply a \emph{constrained MRAF} if $l_i$ and $r_i$ are clear from the context.
If for one of the guesses (enumerations) we find a constrained MRAF, we output YES, and otherwise (if for all guesses we do not find a MRAF) we output NO.

\begin{figure}[t]
    \centering
    \includegraphics[scale=1]{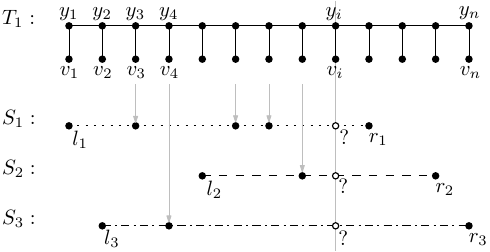}
    \caption{Caterpillar $T_1$ induces a natural ordering on the taxa (leaves). The gray vertical arrows assign each taxa to one of the sets $S_1$, $S_2$, $S_3$. At iteration $i$, the question marks denote possible assignment of $v_i$.}
    \label{fig:exact_algorithm_for_caterpillar}
\end{figure}

We now present our algorithm to decide, for input $T_1$, $T_2$, and pairs $l_i,r_i$, $i=1,\ldots, k$, whether a constrained MRAF exists. 
We define $L:=\{l_1,\ldots,l_n\}$ and $R:=\{r_1,\ldots,r_n\}$.
We view the process of computing constrained MRAF $S_1,\ldots, S_k$ as an iteration over $v_i$, $i=1,\ldots,n$, and assigning $v_i\notin (L\cup R)$ to one of the components $S_1,\ldots,S_k$ (every taxon $v_i\in (L\cup R)$ is already assigned). 
Fig.~\ref{fig:exact_algorithm_for_caterpillar} illustrates this by the gray arrows from each taxon to one of the sets $S_i$. 
In the constrained MRAF, taxon $v_i$ can only be assigned to component $S_j$ if and only if $l_j \prec v_i \prec r_j$.

Tree $T_2$ further limits how taxon $v_i$ can be assigned to components $S_j$ (because we want that $T_1|S_j=T_2|S_j$).
Clearly, for any $S_j\subset X$, $T_1|S_j$ is a caterpillar of maximum degree at most three. 
Thus, since $l_j$ and $r_j$ are the first and last leaf in $T_1|S_j$, they also need to be first and last in $T_2|S_j$. Hence, the inner vertices of the unique path $P_j$ from $l_j$ to $r_j$ in $T_2$ is the subdivision of the spine of $T_2|S_j$.
For a vertex $w\in P_j$ that has a neighbor $w'\notin P_j$ we define a \emph{bag $B_w$ of $P_2$} to be the maximal subtree of $T_2$ rooted at $w'$ that does not include $w$.
See Fig.~\ref{fig:spine_bags_in_T2} for an illustration.
Observe that for any bag $B_w$ of $P_j$, at most one taxon from $B_w$ can be assigned to $S_j$. 
(Because if two taxa $v_a,v_{a'}\in B_w$, $a<a'$, are assigned to $S_j$ then $l_j v_a | v_{a'} r_j$ will not be a quartet of $T_2|S_j$, while it is a quartet of $T_1|S_j$, and thus $T_1|S_j \neq T_2|S_j$.)
The path $P_j$ of $T_2|S_j$ naturally orders all bags of $P_j$. 
It follows that for two bags $B_w$ and $B_{w'}$ where $B_w$ appears before $B_{w'}$ in the ordering along $P_j$, we can select taxa $v_a\in B_w$ and $v_b\in B_{w'}$ into $S_j$ if and only if $v_a\prec v_b$, i.e., if $v_a$ appears before $v_b$ in the caterpillar $T_1$.
We write $v\prec_{P_j} v'$ for taxa $v$, $v'$ such that $v$ is from a bag $B_w$ and $v'$ is from a bag $B_{w'}$, and $B_w$ appears before bag $B_{w'}$ along path $P_j$.
Relation $\prec_{P_j}$ is thus a partial ordering of $X$, where any two taxa from the same bag are uncomparable.
Observe now that any assignment of taxa to $S_j$ that satisfies the above conditions, i.e.,  (i) for every $v_i\in S_j$, $l_j\prec v_i r_i$, (ii) for every bag $B_w$ of $P_j$ there is at most one vertex $v_i\in B_w\cap S_j$, and $(iii)$ for any two taxa $v_p,v_q\in S_j$, $p<q$, $v_p\prec_{P_j} v_q$, we have $T_1|S_j = T_2|S_j$.

\begin{figure}[t]
    \centering
    \includegraphics[scale=1]{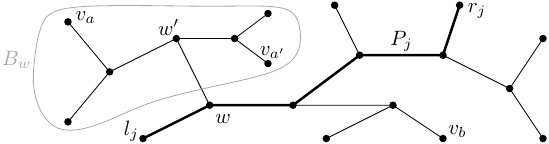}
    \caption{A bag $B_w$ on the $(l_j,r_j)$-path $P_j$. 
    At most one of $v_a,v_{a'}$ can occur in $S_j$.  
    }
    \label{fig:spine_bags_in_T2}
\end{figure}

We can thus assign taxon $v_i$ to component $S_j$ whenever the previously assigned taxon $v_{i'}$ to $S_j$ satisfies $v_{i'}\prec_{P_j} v_i$.
We thus do not need to know all previously assigned taxa to $S_j$, only the last assigned. 
We compute a (partial) restricted MRAF for taxa $X_i:=\{1,2,\ldots,i\}\cup E$ iteratively for $i=1,2,\ldots,k$. We set $X_0:=L\cup R$. For $\vec{z}=(z_1,\ldots,z_k)\in (X\setminus (L\cup R))^k$ and $i=0,1,\ldots,k$ we define a boolean function $\text{craf}^{(i)}(\vec{z})$ as follows: $\text{craf}^{(i)}(\vec{z}) := \text{TRUE}$ if and only if there exists a constrained MRAF
$S^{i}_1,S^{i}_2,\ldots,S^{i}_k$ of $X_i$ such that the last taxon from $X_i\setminus R$ in $S^{i}_\ell$, $\ell=1,\ldots,k$, is $z_\ell$.

Clearly, $\text{craf}^{(0)}(\vec{z})= \text{TRUE}$ if and only if $\vec{z}=(l_1,l_2,\ldots,l_k)$.
Also observe that if no $z_j$ is equal to taxon $v_i$, then $\text{craf}^{(i)}(z_1,\ldots,z_k)$ is FALSE, because in every partition of $X_i$, the last element $v_i$ of $X_i\setminus R$ needs to be last in one of the sets $S_j$.
Now, whenever one of $z_j$ is equal to $v_i$, the function $\text{craf}^{i}$ can be computed recursively as follows:

\begin{align}
   \label{eq:recursive_definition}
    \text{craf}^{(i)}(z_1,\ldots, z_{j-1},z_j & =v_i,z_{j+1} \ldots, z_k) = \\
    & \bigvee_{\substack{z\in X_{i-1}\setminus R \\ z \prec_{P_j} =v_i}}  \text{craf}^{(i-1)}(z_1,\ldots,z_{j-1},z,z_{j+1},\ldots,z_k) \nonumber
    %
\end{align}

This recurrence follows simply because removing $v_i$ from every constrained MRAF of $X_i$ gives a constrained MRAF of $X_{i-1}$.
Now we can compute $\text{craf}^{(i)}$ bottom-up using the dynamic programming. For every value $i=1,\ldots,k$ we enumerate $O(n^k)$ vectors $\vec{z}$, and compute the value $\text{craf}^{(i)}(\vec{z})$ using the recursive relation from Eq.~(\ref{eq:recursive_definition}), thus looking at at most $O(n)$ different entries of $\text{craf}^{(i-1)}$.
This thus leads to the overall runtime of $O(k\cdot n^k \cdot n$). 
Accounting for the enumeration of the $O(n^{2k})$ pairs $l_i,r_i$, $i=1,\ldots,k$ results in the following theorem.

\begin{theorem}
MRAF can be computed in time $O(k\cdot n^{3k+1})$ whenever one of the trees is a caterpillar.
\end{theorem}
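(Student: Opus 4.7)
The plan is to reduce \AgreeForest on $(T_1,T_2)$, with $T_1$ a caterpillar with spine $(y_1,\ldots,y_n)$, to deciding, for every guess of endpoint pairs $(l_j,r_j)_{j=1}^{k}$, whether a \emph{constrained} MRAF respecting the guess (i.e.\ one where $l_j$ and $r_j$ are the spine-first and spine-last taxa of $S_j$) exists. There are $O(n^{2k})$ such guesses, and the algorithm answers YES iff at least one of them succeeds. First I would dispose of the trivial case $n\le 3k$ via Observation~\ref{obs:tiny}(a), and then assume henceforth that no component is a singleton, since any singleton can be merged into some $|S_j|\ge 3$.

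The structural heart of the argument is a characterization of the admissible subsets $S_j$. Let $P_j$ be the $l_j$--$r_j$ path in $T_2$, and for each internal vertex $w$ of $P_j$ let $B_w$ denote the subtree of $T_2$ hanging off $P_j$ at $w$; the bags inherit a linear order from the order of their attachment points along $P_j$, which lifts to a partial order $\prec_{P_j}$ on $X$ in which two taxa from the same bag are incomparable. I would prove that $T_1|S_j = T_2|S_j$ iff (i) every $v_i\in S_j$ satisfies $l_j\prec v_i\prec r_j$, (ii) each bag contributes at most one taxon to $S_j$, and (iii) for any two taxa $v_p,v_q\in S_j$ with $p<q$ we have $v_p\prec_{P_j} v_q$. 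Necessity follows from the quartet criterion of Observation~\ref{obs:tiny}(c): a violation of (ii) or (iii) creates a four-leaf subset of $S_j$ on which $T_1|S_j$ and $T_2|S_j$ induce different quartets. Sufficiency follows because under (i)--(iii) both $T_1|S_j$ and $T_2|S_j$ are caterpillars with the same endpoints and the same induced linear order on their interior leaves, which identifies them.

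Given this characterization the dynamic program is the one outlined before the theorem statement: process $v_1,\ldots,v_n$ in spine order, maintaining a boolean table $\mathrm{craf}^{(i)}(\vec{z})$ with $\vec{z}=(z_1,\ldots,z_k)$ recording the most recently assigned taxon in each component, initialized by $\mathrm{craf}^{(0)}(l_1,\ldots,l_k)=\mathrm{TRUE}$ and updated by the recurrence~(\ref{eq:recursive_definition}). For the runtime I would exploit the observation that at step $i$ only vectors $\vec{z}$ having some coordinate equal to $v_i$ can newly transition to TRUE, and there are only $O(k\cdot n^{k-1})$ such vectors, each computable from $O(n)$ predecessor entries. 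This gives $O(k\cdot n^{k})$ work per step, hence $O(k\cdot n^{k+1})$ per guess, and $O(k\cdot n^{3k+1})$ overall after multiplying by the $O(n^{2k})$ endpoint guesses.

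The main obstacle is the sufficiency direction of the characterization: since $\prec_{P_j}$ only partially orders $X$, one must argue that an assignment satisfying (i)--(iii) produces caterpillars agreeing on \emph{all} induced quartets, not merely those involving $l_j$ or $r_j$. Once this is settled, the DP design and the sharpened active-state count that shaves an $n/k$ factor off the naive runtime are essentially bookkeeping.
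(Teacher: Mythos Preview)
Your proposal is correct and follows essentially the same approach as the paper: the same endpoint-guessing outer loop, the same bag/partial-order characterization (i)--(iii), and the same dynamic program $\mathrm{craf}^{(i)}(\vec{z})$ with recurrence~(\ref{eq:recursive_definition}). Your runtime accounting is in fact slightly more explicit than the paper's---you spell out that only $O(k\cdot n^{k-1})$ state vectors are active at each step, which is what makes the per-guess cost $O(k\cdot n^{k+1})$ rather than the naive $O(n^{k+2})$---but this is exactly the bound the paper states, so there is no substantive difference.
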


\vspace{-0.5cm}
\section{Approximation algorithms}

We now provide a polytime approximation algorithm for MRAF  (Lemma~\ref{lem:logapprox}) and relate the approximability of PIMS to that of MRAF on caterpillars (Lemma~\ref{lem:pims-vs-mraf}). 


\begin{lemma}\label{lem:logapprox}
There is a polynomial-time $O(\log n)$ approximation for computing MRAF, where $n=|X|$. This cannot be better than a $(4/3)$ approximation.
\end{lemma}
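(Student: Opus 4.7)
For the $O(\log n)$ upper bound, I would apply the greedy Set Cover heuristic to the formulation given by Observation~\ref{obs:sc}. At each iteration, given the currently uncovered set of taxa $U'\subseteq X$, add to the cover a maximum-cardinality subset of $U'$ that induces homeomorphic subtrees in $T_1$ and $T_2$; such a subset is precisely a maximum agreement subtree (MAST) of $T_1|U'$ and $T_2|U'$, and can be computed in polynomial time~\cite{STEEL199377}. The classical analysis of greedy Set Cover then yields an $H_{|X|}=O(\log n)$ approximation on the number of selected subsets. Finally, the resulting cover is converted to a partition by arbitrarily assigning each element to a single containing subset; by Observation~\ref{obs:sc} this cannot increase the number of components, so the $O(\log n)$ guarantee transfers to MRAF.

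For the $(4/3)$ lower bound I would invoke the PIMS-to-MRAF reduction of Section~\ref{sec:np-hard}. For a permutation $\pi$ and each pair $(\alpha,\beta)$ with $\alpha+\beta=k$ and $\alpha,\beta\ge 1$, the reduction produces an MRAF instance whose optimum equals the minimum number of parts in an $(\alpha,\beta)$-partition of $\pi$. Enumerating these $O(k)=O(\sqrt{n})$ pairs and taking the minimum recovers the PIMS optimum exactly. Hence a polynomial-time $c$-approximation for MRAF would yield a polynomial-time $c$-approximation for PIMS by running on each constructed instance and returning the best output. Combined with the known NP-hardness of deciding whether a permutation admits a partition into at most three monotone subsequences, any $c$-approximation with $c<4/3$ would distinguish PIMS $=3$ from PIMS $\geq 4$ in polynomial time, contradicting $\mathrm{P}\neq\mathrm{NP}$.

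The main obstacle I expect is pinning down the precise NP-hardness threshold for PIMS that matches the $(4/3)$ lower bound; while Wagner's original proof~\cite{Wagner84} establishes NP-hardness of PIMS in general, the hardness of the ``three parts'' decision version either needs to be cited from a subsequent refinement in the PIMS literature or obtained by a separate ad-hoc reduction (for example, from a variant of $3$-colouring). The remaining approximation-preserving step, namely minimising over the $O(k)$ choices of $(\alpha,\beta)$ in the PIMS-to-MRAF reduction, is essentially syntactic since taking the minimum of $c$-approximate values is still a $c$-approximation of their minimum.
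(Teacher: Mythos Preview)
Your treatment of the $O(\log n)$ upper bound is correct and is exactly what the paper does: simulate greedy Set Cover by repeatedly extracting a MAST on the uncovered taxa, then partitionise the resulting cover via Observation~\ref{obs:sc}.

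The second half, however, misreads the statement. The word ``This'' in ``This cannot be better than a $(4/3)$ approximation'' refers to \emph{the greedy algorithm just described}, not to arbitrary polynomial-time algorithms. The paper's proof is simply a pointer to an explicit $15$-leaf instance (Fig.~\ref{fig:mastbad}) on which MRAF $=3$, the MAST is unique and has size $7$, and every RAF containing a size-$7$ component has at least $4$ components; hence greedy returns $\geq 4$ on an instance with optimum $3$. No complexity assumption is involved, and no reduction is needed.

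Your attempted inapproximability argument also has a technical gap independent of this misreading. The reduction in Section~\ref{sec:np-hard} builds, for each fixed pair $(\alpha,\beta)$ with $\alpha+\beta=k$, an instance whose MRAF is $\leq k$ iff $\pi$ admits an $(\alpha,\beta)$-partition. It does \emph{not} establish that the MRAF optimum of that instance equals any natural ``minimum $(\alpha,\beta)$-partition size'' of $\pi$; when no $(\alpha,\beta)$-partition exists, the reduction gives no control over how large MRAF actually is, so a $c$-approximation on the constructed instances does not transfer to a $c$-approximation for PIMS in the way you claim. And, as you yourself note, you would additionally need NP-hardness of PIMS at threshold~$3$, which is not supplied by~\cite{Wagner84}.
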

\begin{proof}
Given an instance $(U,F)$ of Set Cover
the natural greedy algorithm
yields a $O(\log |U|)$ approximation. Recall the encoding of MRAF as a Set Cover instance in Observation \ref{obs:sc}. We cannot construct this
directly, since $|F|$ is potentially exponential in $n$, but this is not necessary to simulate the greedy algorithm. Let $X'$ be the set of currently uncovered elements of $X$, initially $X=X'$. We compute a MAST of $T_1|X'$ and $T_2|X'$ in polynomial time \cite{STEEL199377}. Let $S$ be the leaf-set of this MAST; we add this to our RAF. We then delete $S$ from $X'$ and iterate this process until $X'$ is empty. Fig.~\ref{fig:mastbad} (in the appendix) shows that 
this cannot be better than a $(4/3)$ approximation.
\hfill $\qed$
\end{proof}




\begin{lemma}\label{lem:pims-vs-mraf}
Let $\pi$ be a permutation of $\{1, \ldots, n\}$ and let $T_1$ and $T_2$ be two caterpillars on leaves $\{1, \ldots n\}$ where $T_1$ is the identity caterpillar and  the $i$th leaf of $T_2$ is $\pi(i)$.
For any solution to the MRAF problem of size $k$, there is a corresponding solution to the
PIMS problem
of size at most $k + 2\sqrt{2k}$. 
\end{lemma}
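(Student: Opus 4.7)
The plan is to convert each MRAF component into a monotone subsequence of $\pi$ by discarding at most two ``misplaced'' taxa per component, and then to repackage all the discarded taxa into a few additional monotone subsequences using the Erd\H{o}s--Szekeres-based partitioning already recalled after the PIMS problem statement.

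\emph{Structure of a single component.} Fix $S_i$ with $|S_i|=m$. Since $T_1$ is the identity caterpillar, the elements of $S_i$ appear along the spine of $T_1$ as $a_1<a_2<\cdots<a_m$, with end cherries $\{a_1,a_2\}$ and $\{a_{m-1},a_m\}$. The condition $T_1|S_i=T_2|S_i$ forces the $T_2$-spine order of $S_i$ (which, since the $j$-th leaf of $T_2$ is $\pi(j)$, coincides with the order of the $S_i$-taxa in $\pi$) to equal $a_1,a_2,\ldots,a_m$ or its reverse, modulo independently swapping the two taxa within each end cherry. For $m\le 2$ this subsequence is already monotone; for $m=3$ (where the component is a star) removing at most one taxon suffices; for $m\ge 4$, removing a single taxon from each end cherry whose pair is swapped -- at most two in total -- yields a strictly monotone subsequence of $\pi$.

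\emph{Assembly.} Let $S'_i$ be the monotone residue of $S_i$ after the above removals, and let $E=\bigcup_i (S_i\setminus S'_i)$ be the union of all discarded taxa; then $|E|\le 2k$. Running the greedy Erd\H{o}s--Szekeres partitioning on the subsequence of $\pi$ supported on $E$ produces at most $2\sqrt{|E|}\le 2\sqrt{2k}$ monotone pieces $E_1,\ldots,E_r$. Together, $S'_1,\ldots,S'_k,E_1,\ldots,E_r$ (dropping any empty residues) form a valid PIMS partition of $\{1,\ldots,n\}$ of size at most $k+2\sqrt{2k}$.

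The main obstacle I anticipate is the case analysis in the degenerate regimes $m\in\{3,4\}$, where the caterpillar $T_j|S_i$ collapses (to a star when $m=3$, and to two joined cherries with no interior spine leaf when $m=4$), and one must check directly that the ``swap at each end cherry'' description still enumerates every admissible $\pi$-order and that at most two removals suffice in every subcase. For $m\ge 5$ the generic spine argument goes through cleanly, and once the small-$m$ verification is in place the $2\sqrt{2k}$ tail is an immediate consequence of the partitioning bound quoted earlier in the paper.
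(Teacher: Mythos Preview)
Your proposal is correct and follows essentially the same route as the paper: trim at most two end-cherry taxa from each component so that what remains is monotone in $\pi$, then absorb the $\le 2k$ trimmed taxa using the $2\sqrt{2k}$ Erd\H{o}s--Szekeres partition. The paper's version is terser (it simply cuts one leaf from each end of every component without the small-$m$ case distinction), but the argument is the same.
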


\begin{proof}
 
    
    We start with an agreement forest for the two caterpillars; each component is itself a caterpillar. We ``cut off'' one leaf from each end of the components in this forest. (This is because the ``interior'' of each component induces a monotonic subsequence, but the cherries at the end of each component potentially violate this). 
    This leaves behind a subpermutation of $\pi$ of length $2k$, which can always be partitioned into at most $2\sqrt{2k}$ monotone subsequences. \hfill $\qed$
\end{proof}
We can create an instance of PIMS from a caterpillar instance of \AgreeForest by treating one caterpillar as the identity and the other as the permutation. 
Any solution for this PIMS instance yields a feasible MRAF solution. Hence:
\[
\text{MRAF} \leq \text{PIMS} \leq \text{MRAF} + 2\sqrt{2 \cdot \text{MRAF}}.
\]
Recall that MRAF on caterpillars is in XP, Theorem~\ref{fig:exact_algorithm_for_caterpillar}.
PIMS is also in XP. Specifically, the PIMS problem is equivalent to the \emph{co-chromatic number} problem on permutation graphs, i.e., partitioning the vertices of a permutation graph into
cliques and independent sets.  When a graph can be partitioned into $r$ cliques and $s$ independent sets it is sometimes called an \emph{$(r,s)$-split graph}. It is known that the perfect $(r,s)$-split graphs can be characterized by a finite set of forbidden induced subgraphs~\cite{KezdySW96}. This
implies that their recognition is in XP parameterized by $r$ and $s$, i.e., when $r$ and $s$ are fixed, $(r,s)$-split graphs can be recognized in polynomial time---this was later improved to FPT~\cite{HeggernesKLRS13}. These XP results are relevant here because they mean that if one of the problems has a polynomial time $c$-approximation, $c$ constant, then for each fixed constant $\epsilon > 0$ the other has a polynomial-time $(c+\epsilon)$-approximation. For example, given a polynomial-time $c$-approximation for MRAF, and $\epsilon > 0$, we first use the XP algorithm for PIMS to check in polynomial time whether $\text{PIMS} \leq \frac{8c}{\epsilon^2}$. If so, we are done. Otherwise the described transformation of MRAF solutions to PIMS solutions yields a $(c+\epsilon)$-approximation. The direction from PIMS to MRAF is similar. PIMS has a polynomial-time 1.71-approximation \cite{fomin2002approximating}. Hence, for every constant $\epsilon > 0$ MRAF on caterpillars has a polynomial-time $(1.71+\epsilon)$-approximation.
\section{Implementation and experimental observations}
MRAF can be modelled as the \emph{weak chromatic number} of hypergraph: the minimum number of colours assigned to vertices, such that no hyperedge is monochromatic. The set of vertices is $X$ and there is a hyperedge $\{a,b,c,d\}$ whenever the two trees have a different quartet topology on $\{a,b,c,d\}$. This leverages the characterization implied by Observation \ref{obs:tiny}.
We implemented this as a constraint program (CP) using the MiniZinc solver \cite{nethercote2007minizinc}. For trees with up to around 30 leaves the CP solves quickly.
The code is available at 
\url{ https://github.com/skelk2001/relaxed_agreement_forests}.
We used this to
extend the analysis of \cite{kelk2016monadic} on the grass dataset of \cite{grass2001}, consisting of fifteen pairs of trees. See Table \ref{tab:mraf}; as expected MRAF grows rather more slowly than MAF. 
In fact, MRAF seems more comparable to the \emph{treewidth} of the \emph{display graph} of the input tree pair. (The display graph of $T_1, T_2$ is obtained by identifying vertices with the same leaf label: the treewidth of this graph is bounded by a function of MAF \cite{kelk2016monadic}.) We obtained similar results on a larger and more challenging dataset, comprising the 163 tree pairs from the 
dataset in \cite{van2022reflections} that had at most 50 leaves after pre-processing.
See Table \ref{tab:summary} in the appendix.


\section{Discussion and open problems}
It remains unclear whether it is NP-hard to compute MRAF on caterpillars, although it seems likely.
Can the finite forbidden obstructions that characterize solutions to PIMS somehow be mapped to MRAF on caterpillars? Could this then be generalized to MRAF on general trees? Indeed, how far can MRAF be viewed as a generalization of the PIMS problem to partial orders? Is MRAF on caterpillars FPT? Does it (or PIMS) have a polynomial kernel? What should reduction rules look like, given that rules for MAF seem to be of  limited use (see Appendix \ref{appendix:reduc})? Strikingly, we do not know whether it is NP-hard to determine whether MRAF $\leq 2$ for two 
general trees,
meaning that the FPT landscape is also unclear.
How far can the logarithmic approximation factor for MRAF on general trees, and the 1.71 approximation for MRAF on caterpillars (equivalently, PIMS) be improved? Finally, it would be instructive to elucidate the potential biological interpretation of this model.
%
%
%
%

\begin{table}[t]
\small
\centering
\begin{tabular}{|l|c|c|c|c|c|c|}
\hline
\textbf{tree pair}        & \multicolumn{1}{l|}{$|X|=n$} & \multicolumn{1}{l|}{\textbf{MAF}} & \multicolumn{1}{l|}{\textbf{MRAF}} & \multicolumn{1}{l|}{\textbf{tw(D)}} & \multicolumn{1}{l|}{\textbf{MAST}} & \multicolumn{1}{l|}{\textbf{$\lceil n/MAST \rceil$}} \\ \hline
\textit{00\_rpoC\_waxy.txt} & 10                                     & 2                                 & \textbf{2}                         & 3                                   & 8                                  & 2                                          \\ \hline
\textit{01\_phyB\_waxy.txt} & 14                                     & 3                                 & \textbf{2}                         & 3                                   & 11                                 & 2                                          \\ \hline
\textit{02\_phyB\_rbcL.txt} & 21                                     & 5                                 & \textbf{3}                         & 3                                   & 14                                 & 2                                          \\ \hline
\textit{03\_rbcL\_waxy.txt} & 12                                     & 4                                 & \textbf{2}                         & 3                                   & 9                                  & 2                                          \\ \hline
\textit{04\_phyB\_rpoC.txt} & 21                                     & 5                                 & \textbf{2}                         & 3                                   & 15                                 & 2                                          \\ \hline
\textit{05\_waxy\_ITS.txt}  & 15                                     & 6                                 & \textbf{3}                         & 4                                   & 10                                 & 2                                          \\ \hline
\textit{06\_phyB\_ITS.txt}  & 30                                     & 8                                 & \textbf{4}                         & 4                                   & 17                                 & 2                                          \\ \hline
\textit{07\_ndhF\_waxy.txt} & 19                                     & 5                                 & \textbf{3}                         & 4                                   & 11                                 & 2                                          \\ \hline
\textit{08\_ndhF\_rpoC.txt} & 34                                     & 9                                 & \textbf{3}                         & 5                                   & 20                                 & 2                                          \\ \hline
\textit{09\_rbcL\_rpoC.txt} & 26                                     & 7                                 & \textbf{4}                         & 5                                   & 14                                 & 2                                          \\ \hline
\textit{10\_ndhF\_rbcL.txt} & 36                                     & 7                                 & \textbf{4}                         & 3                                   & 20                                 & 2                                          \\ \hline
\textit{11\_rbcL\_ITS.txt}  & 29                                     & 11                                & \textbf{4}                         & 5                                   & 17                                 & 2                                          \\ \hline
\textit{12\_ndhF\_phyB.txt} & 40                                     & 7                                 & \textbf{3}                         & 3                                   & 30                                 & 2                                          \\ \hline
\textit{13\_rpoC\_ITS.txt}  & 31                                     & 11                                & \textbf{4}                         & 6                                   & 16                                 & 2                                          \\ \hline
\textit{14\_ndhF\_ITS.txt}  & 46                                     & 16                                & \textbf{5}                         & 6                                   & 20                                 & 3                                          \\ \hline
\end{tabular}
\smallskip
\caption{Comparison of MAF and MRAF for the fifteen tree pairs in the data set \cite{grass2001} analysed in \cite{kelk2016monadic}. We also include MAST,
the lower bound on MRAF given by $\lceil \frac{n}{MAST} \rceil$, and $tw(D)$ which is the treewidth of the display graph obtained from the tree pair.}
\label{tab:mraf}
\end{table}
\bibliographystyle{plain}
\bibliography{sample}

\clearpage
\appendix

\section{Appendix}

\subsection{Omitted proofs}

\textbf{Lemma }\ref{lem:threeleaves}
\emph{If some $S_j$ uses three leaves of any caterpillar $C \in \{L_i,R_i,\hat{L}_i,\hat{R}_i : i \in \{1, \ldots, 2k\}\}$ then all elements of $S_j$ are leaves of $C$.} 
\begin{proof}
    Let $C \in \{L_i,R_i,\hat{L}_i,\hat{R}_i : i \in \{1, \ldots, 2k\}\}$ and suppose $S_j$ contains three leaves $a,b,c$ of $C$, and one leaf $d$ that is not in $C$. Without loss of generality, let $a, b, c$ be ordered in increasing distance from the permutation-encoding part of $T_1$ (i.e., $I$); there may be a tie between $b$ and $c$ but this does not matter. Observe that $T_1$ induces the quartet $da|bc$ but $T_2$ induces the quartet $cd|ba$ or $bd|ca$. This is because in $T_2$, $C$ is attached to the rest of the tree by the opposite end used to attach $C$ to the rest of $T_1$.
    Recall that a necessary and sufficient condition for $S_j$ to be a component of a RAF is that they induce exactly the same set of quartet topologies in both trees (Observation \ref{obs:tiny}(c)); contradiction. \hfill $\qed$
\end{proof}
\noindent
\textbf{Lemma \ref{lem:5sets}.}
\emph{No $S_j$ can contain five elements where each one belongs to a different set among: $\mathcal{L}_1, \mathcal{L}_2, \mathcal{R}_1, \mathcal{R}_2, \hat{\mathcal{L}}_1, \hat{\mathcal{L}}_2, \hat{\mathcal{R}}_1, \hat{\mathcal{R}}_2$.} 
\begin{proof}
Let $a,b,c,d,e \in S_j$ be chosen from 5 distinct sets from the 8 listed.
Consider the four set pairs
$\{\hat{\mathcal{L}}_1, {\mathcal{L}}_1\}$,
$\{{\mathcal{L}}_2,\hat{\mathcal{L}}_2\}$,
$\{\hat{\mathcal{R}}_1, {\mathcal{R}}_1\}$,
$\{{\mathcal{R}}_2,\hat{\mathcal{R}}_2\}$ which together partition the 8 sets. Note that the two sets in each pair are ``adjacent'' in $T_1$, but in $T_2$ they are on opposite sides of the permutation-encoding part $P$. By the pigeonhole principle at least one of these four set pairs must have elements from $a,b,c,d,e$ in both sets of the pair. Without loss of generality, suppose $a \in \hat{\mathcal{L}}_1$ and
$b \in \mathcal{L}_1$. Then in $T_1|\{a,b,c,d,e\}$, leaves $\{a,b\}$ form a cherry. However, due to $a$ and $b$ being on opposite sides of $P$ in $T_2$, their options for forming a cherry there are highly constrained. If, say, $c \in \mathcal{R}_1$ then $|S_j|\leq 3$ because $\{a,c\}$ then forms a cherry in $T_2|\{a,b,c,d,e\}$ and the only way for a taxon (here $a$) to be in two distinct cherries in $T_1|S_j = T_2|S_j$, is if $S_j$ has exactly 3 leaves. The same analysis holds if $c \in \hat{\mathcal{R}_1}$. Hence, again by the pigeonhole principle, at least one taxon from $\{c,d,e\}$ must be in $\hat{\mathcal{R}}_2 \cup \mathcal{L}_2$, and at least one taxon from $\{c,d,e\}$ must be in $\hat{\mathcal{L}}_2 \cup \mathcal{R}_2$. But then $\{a,b\}$ is certainly not a cherry in $T_2|\{a,b,c,d,e\}$. Hence, $T_1|\{a,b,c,d,e\} \neq T_2|\{a,b,c,d,e\}$; contradiction. \hfill $\qed$
\end{proof}

\noindent
\emph{Proof that the only way for $S_j$ to intersect with 4 sets \emph{and} a permutation-encoding taxon $v_i$, is if the 4 sets are}
$\{\mathcal{L}_1, \mathcal{L}_2, \mathcal{R}_1, \mathcal{R}_2 \}$ \emph{or} $\{\hat{\mathcal{L}}_1, \hat{\mathcal{L}}_2, \hat{\mathcal{R}}_1, \hat{\mathcal{R}}_2 \}$.\footnote{The GitHub page for this article includes an alternative and independent computational verification of this fact, based on enumerating all MASTs of two 9-taxon trees using constraint programming.}
\begin{proof}
Recall the four pairs $\{\hat{\mathcal{L}}_1, {\mathcal{L}}_1\}$,
$\{{\mathcal{L}}_2,\hat{\mathcal{L}}_2\}$,
$\{\hat{\mathcal{R}}_1, {\mathcal{R}}_1\}$,
$\{{\mathcal{R}}_2,\hat{\mathcal{R}}_2\}$ which are ``adjacent'' in $T_1$. To this list we
can add four other pairs, which are the sets which are
``adjacent'' in $T_2$. These are  
$\{{\mathcal{L}}_1,\hat{\mathcal{R}}_1\}$,
$\{\hat{\mathcal{R}}_2, {\mathcal{L}}_2\}$,
$\{{\mathcal{R}}_1,\hat{\mathcal{L}}_1\}$,
$\{\hat{\mathcal{L}}_2, {\mathcal{R}}_2\}$. All these in total 8 pairs have
the property that they are ``adjacent'' in one of the the two input trees,
but split across the permutation-encoding part of the other. Suppose $S_j$ contains taxa from both sets in one of these 8 pairs, $\{\hat{\mathcal{L}}_1, \mathcal{L}_1\}$ say (the other cases are symmetrical). Here $S_j$ has only two ways to intersect with two further sets whilst ensuring the same topology in both trees: (1) $\{\hat{\mathcal{R}}_2, {\mathcal{L}}_2\}$, (2) $\{\hat{\mathcal{L}}_2, {\mathcal{R}}_2\}$. However, whether (1) is chosen or (2), if $S_j$ contains some permutation-encoding taxon $v_i$, $v_i$ will be in a different location (with respect to these four sets) in $T_1$ than in $T_2$, contradicting that $T_1|S_j = T_2|S_j$. This means that for each of the 8 pairs, $S_j$ must
avoid intersecting with both the sets in the pair. This leaves at most $4 \times 4 = 16$ possible valid combinations: one of the 4 pairs
$\{\mathcal{L}_1,\mathcal{L}_2\}$,
$\{\mathcal{L}_1,\hat{\mathcal{L}_2}\}$,
$\{\hat{\mathcal{L}_1},\mathcal{L}_2\}$,
$\{\hat{\mathcal{L}_1},\hat{\mathcal{L}_2\}}$ from the left side of $T_1$,
and one of the 4 pairs
$\{\mathcal{R}_1,\mathcal{R}_2\}$,
$\{\mathcal{R}_1,\hat{\mathcal{R}_2}\}$,
$\{\hat{\mathcal{R}_1},\mathcal{R}_2\}$,
$\{\hat{\mathcal{R}_1},\hat{\mathcal{R}_2\}}$ from the right side of $T_1$. With some checking it can be verified that, of these 16, only $\{\mathcal{L}_1, \mathcal{L}_2, \mathcal{R}_1, \mathcal{R}_2 \}$ and $\{\hat{\mathcal{L}}_1, \hat{\mathcal{L}}_2, \hat{\mathcal{R}}_1, \hat{\mathcal{R}}_2 \}$ induce the same quartet topology in $T_1, T_2$. \hfill $\qed$
\end{proof}

\subsection{Reduction rules}
\label{appendix:reduc}
We observe that if two trees $T_1, T_2$ have a common cherry $\{a,b\}$, the well-known \emph{common cherry reduction} - in each tree, we delete $a, b$ and relabel their parent $ab$ - preserves MRAF. When applied to exhaustion this is called the \emph{subtree} reduction rule. This is known to be very effective in the phylogenetics literature when pre-processing input trees to reduce their size -- and will thus help with (exact) computation of MRAF in practice, given its NP-hardness. On the other hand, the much-studied \emph{common chain} reduction rule is not safe. The definition of this reduction rule is rather technical (see e.g. \cite{van2022reflections} for a formal definition) but essentially it shrinks two long common caterpillars to two shorter caterpillars. See Fig. \ref{fig:chain}: as can be verified with our code, shortening the common chain lowers MRAF. This is in contrast to MAF, where both the subtree and common chain reduction rules preserve MAF, and in fact yield a linear kernel \cite{AllenSteel2001}. We note that if a pair of trees has no common cherries or common chains, the ratio $\frac{n}{MRAF}$ can still be arbitrarily large. For example, two caterpillars of the form $1, z, 2, y, 3, x...$ and $1, a, 2, b, 3, c...$ have no common cherries or chains, and a MRAF of size 2, but an arbitrarily large number of leaves. Hence, any attempt to establish a fixed parameter tractability result for MRAF via kernelization must consider a different strategy.
\begin{figure}[h]
\centering
\includegraphics[scale=0.15]{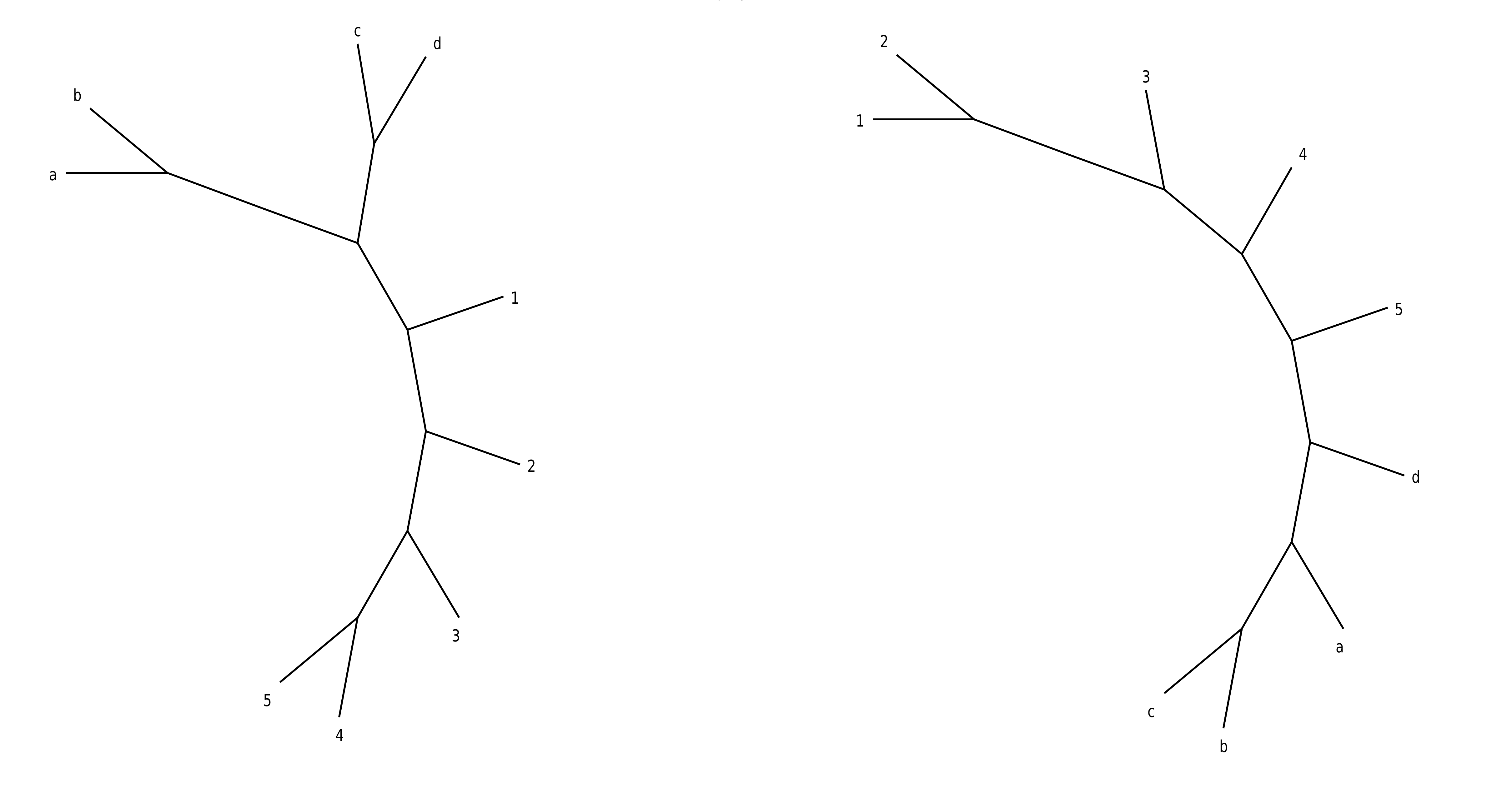}
\caption{Two trees with a common chain of size 5, comprising the leaves
$C=\{1,2,3,4,5\}$. The MRAF of these trees is 3. However, deleting any one taxon from $C$ yields a pair of trees with MRAF equal to 2.}
\label{fig:chain} 
\end{figure}


\begin{figure}[h]
\centering
\includegraphics[scale=0.15]{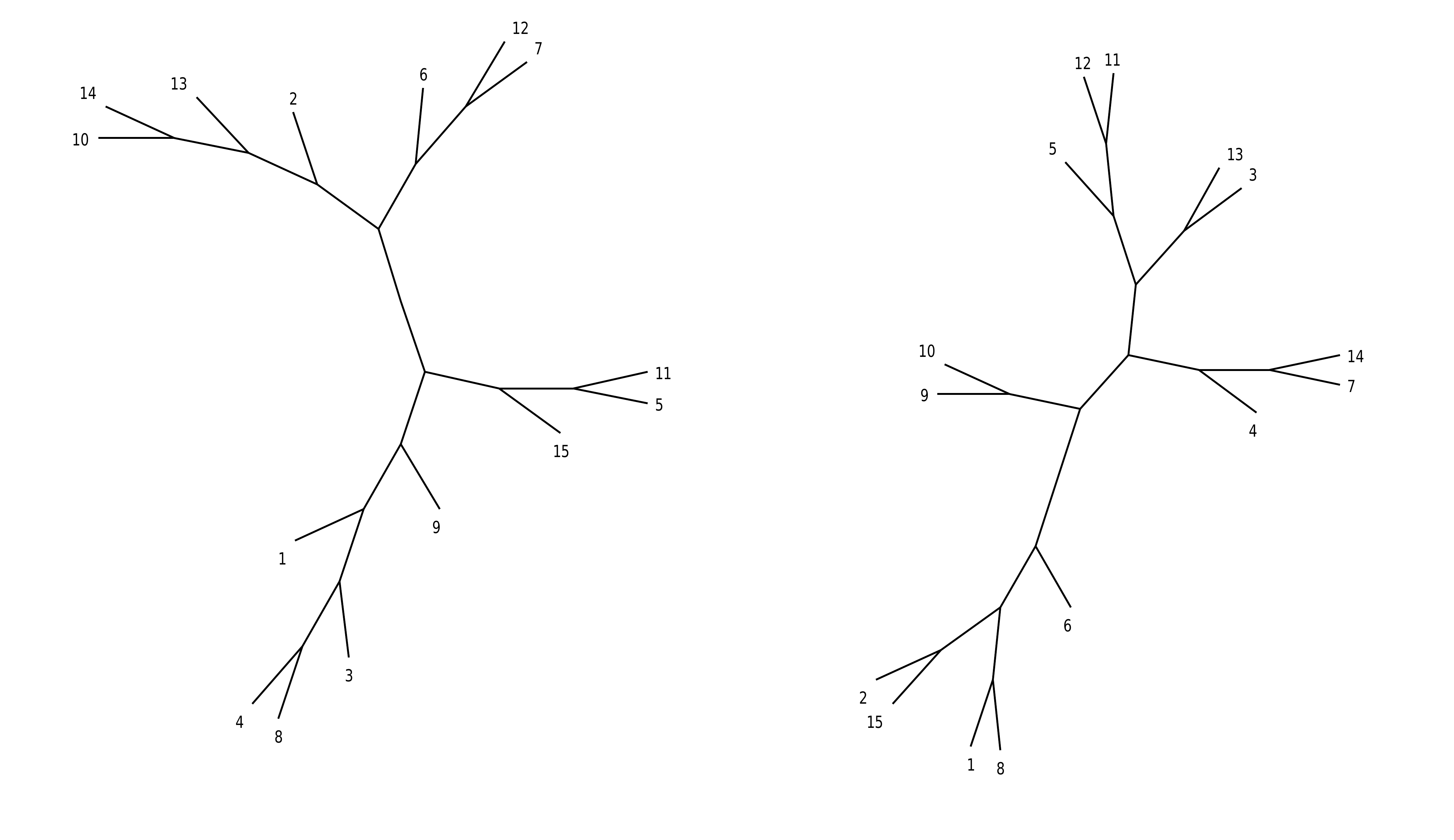}
\caption{These two trees have a MRAF of size 3, $\{1,7,8,10,12,15\}, \{5,11,13,14\}, \{2,3,4,6,9\}\}$ and a unique MAST of size 7: $\{1,5,7,8,9,11,14\}$. However, any RAF that contains a component of size 7, has 4 or more components. This has been verified computationally. Due to the uniqueness of the MAST, this shows that even an algorithm that can non-deterministically decide what the `correct' MAST is to choose next, cannot always solve MRAF optimally. We also have a very similar example, also proving a $(4/3)$ lower bound, where both trees are caterpillars. This example, and the aforementioned computational proof, are available on the GitHub page.}
\label{fig:mastbad} 
\end{figure}


\begin{table}[h]
\centering
\scriptsize
\begin{tabular}{r|c|c|c|c|}
\cline{2-5}
\multicolumn{1}{l|}{}                   & \textbf{Min} & \textbf{Max} & \textbf{Avg} & \textbf{Stdev} \\ \hline
\multicolumn{1}{|r|}{\textit{MAF}}      & 5            & 27           & 13.07        & 7.19           \\ \hline
\multicolumn{1}{|r|}{\textit{MRAF}}     & 2            & 8            & 4.47         & 1.27           \\ \hline
\multicolumn{1}{|r|}{\textit{MRAF-MAF}} & -20          & -1            & -8.60        & 6.14           \\ \hline
\multicolumn{1}{|r|}{\textit{tw}}       & 3            & 13           & 7.28         & 2.46           \\ \hline
\multicolumn{1}{|r|}{\textit{MRAF-tw}}  & -1           & 7            & 2.81         & 1.58           \\ \hline
\end{tabular}
\caption{Summary statistics for the 163 tree pairs obtained from the dataset in \cite{van2022reflections} by restricting to trees which, after subtree reduction, have at most 50 taxa.}
\label{tab:summary}
\end{table}

\end{document}